\newcommand{\rd}{\mathrm{d}}
\newcommand{\beq}{\begin{equation}}
\newcommand{\eeq}{\end{equation}}
\newcommand{\bea}{\begin{eqnarray}}
\newcommand{\eea}{\end{eqnarray}}
\newcommand\noi{\noindent}
\newcommand{\bk}{\begin{cases}}
\newcommand{\ek}{\end{cases}}
\newtheorem{theorem}{Theorem}
\newtheorem{definition}[theorem]{Definition}
\newtheorem{example}[theorem]{Example}
\newtheorem{lemma}[theorem]{Lemma}
\newtheorem{proposition}[theorem]{Proposition}
\newtheorem{remark}[theorem]{Remark}
\begin{document}
\author{Giorgio Tondo}
\address{Dipartimento di Matematica e Geoscienze, Universit\`a  degli Studi di Trieste,
piaz.le Europa 1, I--34127 Trieste, Italy.}
\email{tondo@units.it}

\title[Haantjes Algebras of the Lagrange top]{Haantjes Algebras of the Lagrange top}

\begin{abstract}
A  symplectic--Haantjes manifold and a Poisson--Haantjes manifold for the Lagrange top are studied and a set of Darboux-Haantjes coordinates are computed. Such  coordinates are  separation variables for the  associated Hamilton-Jacobi equation.
\end{abstract}
 \maketitle
\tableofcontents

\section{Introduction}
In this article, we propose a   new  theoretical framework for dealing with integrable  mechanical systems, and we illustrate  it on  the classical
example of the Lagrange top. Precisely, we discuss such Hamiltonian system  in the framework of  the theory of Haantjes algebras, very recently introduced in \cite{TT2017}. To this aim, the new geometric notion of Poisson--Haantjes ($P\mathcal{H}$) manifold  is proposed, as a natural extension of the  notion  of symplectic-Haantjes ($\omega \mathcal{H}$) manifold, already introduced in \cite{TT2016prepr}. The main idea is to replace the symplectic structure of a 
$\omega \mathcal{H}$ manifold with a Poisson (not invertible) bivector that fulfills a suitable algebraic compatibility condition with a Haantjes algebra of operators on the tangent bundle of the manifold.  Besides, the dynamical notion of Magri-Haantjes chain \cite{TT2016prepr}  is generalized, to be adapted to  the novel geometrical notion of $P\mathcal{H}$ manifold. These new  structures can be used to describe bi-Hamiltonian chains of vector fields for Gelfand-Zakarevich (GZ) manifolds  \cite{GZ}. When such structures can be reduced to the symplectic leaves of the Poisson bivector, a $\omega \mathcal{H}$ structure is obtained and a set of \textit{Darboux-Haantjes coordinates} can be computed. Such coordinates are   separation variables for the Hamilton-Jacobi equation of the Hamiltonian systems belonging to a Magri--Haantjes chain. 
Here, we detail the Lagrange top, that is a GZ system of corank $2$, whilst the discussion about the stationary flows of the KdV hierarchy (a GZ system of corank $1$) will appear elsewhere.  

The notion of $P\mathcal{H}$ structures, is inspired, from one side, by the theory of Poisson--Nijenhuis structures \cite{MM, KM}, from the other side by the notion of  Haantjes manifolds \cite{MFrob,MGall13,MGWDVV,MGall15}. In our opinion, the new theory provides us a very flexible and unifying theoretical framework for dealing with the integrability and
separability properties of Hamiltonian systems, and represents a formulation 
that completes  the one offered by the Poisson--Nijenhuis geometry.

\par
The paper is organized as follows. After a review, in Section 2, of the main algebraic structures needed in this work, we recall in Section 3  the concept of Haantjes algebras. In Section 4, we present the new notion of Poisson--Haantjes manifolds and of the related Magri--Haantjes chains. In Section 5, we apply the theory to the real and complex Lagrange top.

\section{Nijenhuis  and Haantjes torsion}
\label{sec:1}
 The natural frames $(\frac{\partial}{\partial x_1},\ldots,\frac{\partial}{\partial x_n})$ of  local charts $(x_1,\ldots,x_n)$ in a differentiable manifold, being obviously integrable, can be characterized in a \textit{tensorial manner} as  eigen-distributions of a suitable class of  $(1,1)$ tensor fields, i.e. the ones with vanishing  Haantjes tensor. In this section, we review some basic  results concerning the theory of such tensors. For a more complete treatment, see
the original papers \cite{Haa,Nij},  the related ones \cite{Nij2,FN}, and the nice recent review \cite{K17}.

Let $M$ be a differentiable manifold and $\boldsymbol{L}:TM\rightarrow TM$ be a $(1,1)$ tensor field, i.e., a field of linear operators on the tangent space at each point of $M$.
\begin{definition}
The
 \textit{Nijenhuis torsion} of $\boldsymbol{L}$ is the skew-symmetric  $(1,2)$ tensor field defined by
\begin{equation} \label{eq:Ntorsion}
\mathcal{T}_ {\boldsymbol{L}} (X,Y):=\boldsymbol{L}^2[X,Y] +[\boldsymbol{L}X,\boldsymbol{L}Y]-\boldsymbol{L}\Big([X,\boldsymbol{L}Y]+[\boldsymbol{L}X,Y]\Big),
\end{equation}
where $X,Y \in TM$ and $[ \ , \ ]$ denotes the commutator of two vector fields.
\end{definition}
In local coordinates $\boldsymbol{x}=(x_1,\ldots, x_n),$ the Nijenhuis torsion can be written in the form
\begin{equation}\label{eq:NtorsionLocal}
(\mathcal{T}_{\boldsymbol{L}})^i_{jk}=\sum_{\alpha=1}^n\bigg(\frac{\partial {\boldsymbol{L}}^i_k} {\partial x^\alpha} {\boldsymbol{L}}^\alpha_j -\frac{\partial {\boldsymbol{L}}^i_j} {\partial x^\alpha} {\boldsymbol{L}}^\alpha_k+\Big(\frac{\partial {\boldsymbol{L}}^\alpha_j} {\partial x^k} -\frac{\partial {\boldsymbol{L}}^\alpha_k} {\partial x^j}\Big ) {\boldsymbol{L}}^i_\alpha \bigg)\ ,
\end{equation}
amounting to $n^2(n-1)/2$ independent components. 
\begin{definition}
\noi The \textit{Haantjes torsion} associated with $\boldsymbol{L}$ is the $(1, 2)$ tensor field defined by
\begin{equation} \label{eq:Haan}
\mathcal{H}_{\boldsymbol{L}}(X,Y):=\boldsymbol{L}^2\mathcal{T}_{\boldsymbol{L}}(X,Y)+\mathcal{T}_{\boldsymbol{L}}(\boldsymbol{L}X,\boldsymbol{L}Y)-\boldsymbol{L}\Big(\mathcal{T}_{\boldsymbol{L}}(X,\boldsymbol{L}Y)+\mathcal{T}_{\boldsymbol{L}}(\boldsymbol{L}X,Y)\Big).
\end{equation}
\end{definition}
The skew-symmetry of the Nijenhuis torsion implies that the Haantjes tensor is also skew-symmetric.
Its local expression is

\begin{equation}\label{eq:HaanLocal}
(\mathcal{H}_{\boldsymbol{L}})^i_{jk}=  \sum_{\alpha,\beta=1}^n\bigg(
\boldsymbol{L}^i_\alpha \boldsymbol{L}^\alpha_\beta(\mathcal{T}_{\boldsymbol{L}})^\beta_{jk}  +
(\mathcal{T}_{\boldsymbol{L}})^i_{\alpha \beta}\boldsymbol{L}^\alpha_j \boldsymbol{L}^\beta_k-
\boldsymbol{L}^i_\alpha\Big( (\mathcal{T}_{\boldsymbol{L}})^\alpha_{\beta k} \boldsymbol{L}^\beta_j+
 (\mathcal{T}_{\boldsymbol{L}})^\alpha_{j \beta } \boldsymbol{L}^\beta_k \Big)
 \bigg) \ .
\end{equation}
We shall  consider a case, in which the computation of the  Haantjes torsion will be particularly simple \cite{TT2016prepr}.
\begin{proposition}\label{pr:Hd}
Let $\boldsymbol{L}$ be a smooth field of operators. If there exists a local  chart $\{U, (x_1,\ldots, x_n)\}$ where $\boldsymbol{L}$ takes the diagonal form 

\begin{equation}
\label{eq:Ldiagonal}
\boldsymbol{L}(\boldsymbol{x})=\sum_{i=1}^n l_i(\boldsymbol{x}) \frac{\partial}{\partial x_i} \otimes \rd x_i \ ,
\end{equation}
 then the Haantjes tensor of $\boldsymbol{L}$ vanishes.
\end{proposition}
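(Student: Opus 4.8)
The plan is to exploit the fact that both the Nijenhuis torsion $\mathcal{T}_{\boldsymbol{L}}$ and the Haantjes torsion $\mathcal{H}_{\boldsymbol{L}}$ are genuine tensor fields, so that the identity $\mathcal{H}_{\boldsymbol{L}}=0$ is chart-independent: it suffices to verify the vanishing of the components $(\mathcal{H}_{\boldsymbol{L}})^i_{jk}$ in the distinguished chart $\{U,(x_1,\ldots,x_n)\}$ in which $\boldsymbol{L}$ is diagonal. There, by \eqref{eq:Ldiagonal}, the matrix of $\boldsymbol{L}$ is simply $\boldsymbol{L}^i_j=l_i\,\delta^i_j$, and I would substitute this form into the local expressions \eqref{eq:NtorsionLocal} and \eqref{eq:HaanLocal}, letting the Kronecker deltas collapse the summations.

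First I would compute the Nijenhuis torsion. Inserting $\boldsymbol{L}^i_j=l_i\,\delta^i_j$ into \eqref{eq:NtorsionLocal}, each sum over the repeated index reduces to a single surviving term, and a short calculation gives
\begin{equation}
(\mathcal{T}_{\boldsymbol{L}})^i_{jk}=\delta^i_k\,(l_j-l_i)\,\frac{\partial l_i}{\partial x_j}+\delta^i_j\,(l_i-l_k)\,\frac{\partial l_i}{\partial x_k}\ .
\end{equation}
The crucial structural feature to record here is the \emph{support} of this object: a component $(\mathcal{T}_{\boldsymbol{L}})^i_{jk}$ can be nonzero only when the upper index $i$ coincides with one of the two lower indices, i.e. only when $i=j$ or $i=k$.

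Next I would feed this into the Haantjes formula \eqref{eq:HaanLocal}. Using diagonality once more, the products $\boldsymbol{L}^i_\alpha\boldsymbol{L}^\alpha_\beta=l_i^2\,\delta^i_\beta$, $\boldsymbol{L}^\alpha_j\boldsymbol{L}^\beta_k=l_jl_k\,\delta^\alpha_j\delta^\beta_k$, and their analogues collapse all four terms into scalar multiples of the single Nijenhuis component $(\mathcal{T}_{\boldsymbol{L}})^i_{jk}$. Collecting the four coefficients, I expect the decisive factorization
\begin{equation}
(\mathcal{H}_{\boldsymbol{L}})^i_{jk}=\big(l_i^2+l_jl_k-l_il_j-l_il_k\big)\,(\mathcal{T}_{\boldsymbol{L}})^i_{jk}=(l_i-l_j)(l_i-l_k)\,(\mathcal{T}_{\boldsymbol{L}})^i_{jk}\ .
\end{equation}
The proof then closes by combining the two displays: on the support of $\mathcal{T}_{\boldsymbol{L}}$ one has $i=j$ or $i=k$, and in either case one of the factors $(l_i-l_j)$, $(l_i-l_k)$ vanishes, so the product is identically zero for all $i,j,k$. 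Hence $\mathcal{H}_{\boldsymbol{L}}=0$ in the diagonal chart, and by tensoriality everywhere on $U$.

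The main obstacle is not conceptual but bookkeeping: one must carry the index contractions through \eqref{eq:HaanLocal} carefully enough to see that the unwieldy coefficient $l_i^2+l_jl_k-l_il_j-l_il_k$ reorganizes itself into the product $(l_i-l_j)(l_i-l_k)$, and then recognize that this product is precisely tuned to annihilate the Nijenhuis torsion on its support. No nondegeneracy or integrability hypothesis on the eigenvalue functions $l_i$ is required, which is why the statement holds for arbitrary smooth $l_i$.
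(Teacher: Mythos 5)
Your proof is correct, and the computations check out: substituting $\boldsymbol{L}^i_j=l_i\delta^i_j$ into \eqref{eq:NtorsionLocal} does give $(\mathcal{T}_{\boldsymbol{L}})^i_{jk}=\delta^i_k(l_j-l_i)\partial l_i/\partial x_j+\delta^i_j(l_i-l_k)\partial l_i/\partial x_k$, and \eqref{eq:HaanLocal} then collapses to $(\mathcal{H}_{\boldsymbol{L}})^i_{jk}=(l_i-l_j)(l_i-l_k)(\mathcal{T}_{\boldsymbol{L}})^i_{jk}$, which vanishes on the support $\{i=j\}\cup\{i=k\}$ of the Nijenhuis torsion. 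The paper itself gives no proof of this proposition, deferring to \cite{TT2016prepr}; your argument is exactly the standard computation carried out there, so there is nothing to add.
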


Due to the relevance of the  Haantjes (Nijenhuis) vanishing condition,
we propose the following definition.
\begin{definition}
A Haantjes (Nijenhuis)   operator is a field of  operators whose  Haantjes (Nijenhuis) tensor identically vanishes.
\end{definition}

It has been proved in the following proposition   that a single Haantjes operator   generates an algebra of Haantjes operators over the ring of smooth functions on $M$.  This is not the case for a Nijenhuis operator  $\boldsymbol{N}$ since a polynomial in $\boldsymbol{N}$ with coefficients $a_j \in C^\infty(M)$,  is not necessarily a Nijenhuis operator.
\par
\begin{proposition} \label{pr:Lpowers}  \cite{BogI}.
Let $\boldsymbol{L}$ be a Haantjes operator in $M$. Then for any  polynomial in $\boldsymbol{L}$ with coefficients $a_{j}\in C^\infty(M)$, the associated Haantjes tensor vanishes, i.e.
\begin{equation}
\mathcal{H}_{\boldsymbol{L}}(X,Y)= 0 \ \Longrightarrow \
\mathcal{H}_{(\sum_j a_{j} (\boldsymbol{x}) \boldsymbol{L}^j)}(X,Y)= 0.
\end{equation}
\end{proposition}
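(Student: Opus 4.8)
The plan is to control the Haantjes tensor under each of the three elementary operations out of which the operator $\boldsymbol{P}:=\sum_j a_j(\boldsymbol{x})\,\boldsymbol{L}^j$ is built, namely taking powers $\boldsymbol{L}\mapsto\boldsymbol{L}^j$, multiplying by a smooth function $\boldsymbol{L}^j\mapsto a_j\boldsymbol{L}^j$, and summing. Since $\mathcal{H}_{\boldsymbol{L}}$ in \eqref{eq:Haan} is assembled tensorially from the Nijenhuis torsion $\mathcal{T}_{\boldsymbol{L}}$ of \eqref{eq:Ntorsion} and from $\boldsymbol{L}$ itself, the whole problem reduces to transformation identities for $\mathcal{T}$ and $\mathcal{H}$ under these operations. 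It is worth stressing at the outset that the delicate step is the summation: the Haantjes tensor is \emph{not} additive, so $\mathcal{H}_{a_j\boldsymbol{L}^j}=0$ for each $j$ would not by itself give $\mathcal{H}_{\boldsymbol{P}}=0$. What makes the statement true is that all the operators $\boldsymbol{L}^j$ commute and share the eigen-distributions of $\boldsymbol{L}$.

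The quickest route I would take exploits this common eigen-structure directly. On the open set $U\subseteq M$ where $\boldsymbol{L}$ has pairwise distinct eigenvalues --- which in the cases of interest, such as the Lagrange top, is dense --- the hypothesis $\mathcal{H}_{\boldsymbol{L}}=0$ guarantees, by Haantjes' theorem (the converse of Proposition \ref{pr:Hd}), a chart in which $\boldsymbol{L}$ takes the diagonal form \eqref{eq:Ldiagonal}. Every polynomial is then \emph{simultaneously} diagonal in that same chart,
\[
\boldsymbol{P}=\sum_{i=1}^{n}\Big(\sum_j a_j(\boldsymbol{x})\,l_i(\boldsymbol{x})^j\Big)\,\frac{\partial}{\partial x_i}\otimes\rd x_i \ ,
\]
so Proposition \ref{pr:Hd} applies verbatim to $\boldsymbol{P}$ and yields $\mathcal{H}_{\boldsymbol{P}}=0$ on $U$. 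As $\mathcal{H}_{\boldsymbol{P}}$ is a smooth tensor field, its vanishing on the dense set $U$ propagates to all of $M$ by continuity.

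To reach the general case one cannot rely on a diagonalizing chart, and I would instead establish the purely algebraic transformation identities that lie behind the normal-form picture. The prototype is the scaling identity: expanding \eqref{eq:Ntorsion} with $(f\boldsymbol{L})^2=f^2\boldsymbol{L}^2$ and the Leibniz rule $[gU,hV]=gh\,[U,V]+g\,(Uh)\,V-h\,(Vg)\,U$ gives
\[
\mathcal{T}_{f\boldsymbol{L}}(X,Y)=f^2\,\mathcal{T}_{\boldsymbol{L}}(X,Y)+f\big((\boldsymbol{L}X)(f)\,\boldsymbol{L}Y-(\boldsymbol{L}Y)(f)\,\boldsymbol{L}X-X(f)\,\boldsymbol{L}^2Y+Y(f)\,\boldsymbol{L}^2X\big) \ .
\]
Inserting this into \eqref{eq:Haan} and collecting terms, one checks that the correction terms carrying a derivative of $f$ cancel in pairs, so that $\mathcal{H}_{f\boldsymbol{L}}=f^4\,\mathcal{H}_{\boldsymbol{L}}$; thus scaling by a function preserves the vanishing. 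A strictly analogous, if longer, identity expresses $\mathcal{T}_{\boldsymbol{L}^{j}}$ through $\mathcal{T}_{\boldsymbol{L}}$ composed with powers of $\boldsymbol{L}$, and feeds the same cancellation mechanism for the powers and for their function-linear combination.

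The main obstacle is precisely this identity-level cancellation on the non-generic locus, where eigenvalues coalesce or $\boldsymbol{L}$ fails to be semisimple --- for instance a purely nilpotent Haantjes operator, which is diagonalizable nowhere, so that the set $U$ above is empty and the continuity argument is vacuous. There one must verify, as an algebraic identity valid irrespective of the spectral type of $\boldsymbol{L}$, that after substitution into \eqref{eq:Haan} every surviving contribution to $\mathcal{H}_{\boldsymbol{P}}$ reproduces a factor $\mathcal{H}_{\boldsymbol{L}}$ (equivalently, carries enough $\boldsymbol{L}$-factors multiplying $\mathcal{T}_{\boldsymbol{L}}$ to rebuild it). Establishing this cancellation is the technical heart of the result and is the content of the computation of \cite{BogI}.
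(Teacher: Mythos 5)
The paper offers no argument of its own for this proposition: its ``proof'' is the single line ``See Corollary 3.3, p.~1136 of \cite{BogI}''. Your proposal is therefore not really in competition with a proof in the text; it is a reconstruction of how Bogoyavlenskij's argument must go. Your scaling identity for $\mathcal{T}_{f\boldsymbol{L}}$ is correct, and you correctly identify that the non-additivity of $\mathcal{H}$ makes the sum the delicate step. But observe that you, too, end by deferring the decisive point --- the verification that every surviving term of $\mathcal{H}_{\sum_j a_j\boldsymbol{L}^j}$ carries enough copies of $\mathcal{T}_{\boldsymbol{L}}$ sandwiched by powers of $\boldsymbol{L}$ to reassemble $\mathcal{H}_{\boldsymbol{L}}$ --- to the computation in \cite{BogI}. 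So what you have is a proof outline at the same level of completeness as the paper's bare citation, not a self-contained proof; the algebraic cancellation is the entire content of the result and remains unestablished in your text.

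Two specific corrections to the outline itself. First, your ``quick route'' is miscalibrated against this paper's own example: the Haantjes operators of the Lagrange top have \emph{double} eigenvalues everywhere (the minimal polynomial \eqref{eq:mpN} has degree $2$ on a four-dimensional leaf), so the locus where the eigenvalues are pairwise distinct is empty, not dense. What rescues the diagonalization argument there is semisimplicity, not simplicity of the spectrum: Haantjes' theorem gives integrability of the eigen-distributions and of all their direct sums, hence a chart of the form \eqref{eq:Ldiagonal} with repeated diagonal entries, to which Proposition \ref{pr:Hd} still applies verbatim. Second, as you yourself concede, no diagonalization or density argument reaches non-semisimple (e.g.\ nilpotent) Haantjes operators, so the pointwise algebraic identity is not a refinement of the generic argument but the only actual proof; leading with the generic picture risks presenting a tensorial identity as if it were a density statement.
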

\begin{proof}
See Corollary 3.3, p. 1136 of \cite{BogI}.
\end{proof}

\section{Haantjes algebras }
In this section we recall the  notion of \textit{Haantjes algebra} and the class of  \textit{cyclic} Haantjes algebras,
very recently introduced in \cite{TT2017}.

\begin{definition}\label{def:HM}
A Haantjes algebra of rank $m$ is a pair    $(M, \mathscr{H})$ which satisfies the following  conditions:
\begin{itemize}
\item
$M$ is a differentiable manifold of dimension $n$;
\item
$ \mathscr{H}$ is a set of Haantjes  operators $\boldsymbol{K}_i:TM\rightarrow TM$ that generates
\begin{itemize}
\item
 a free module  of rank $m$  over the  	ring of
smooth functions on $M$
\begin{equation}\label{eq:Hmod}
\mathcal{H}_{\big( f\boldsymbol{K_i} +
                             g\boldsymbol{K}_j\big)}(X,Y)= 0
 \ , \qquad\forall X, Y \in TM \ ,\forall f,g \in C^\infty(M)\ ;
\end{equation}
  \item
 a ring  w.r.t. the composition operation
\begin{equation} \label{eq:Hring}
\mathcal{H}_{\big(\boldsymbol{K}_i \, \boldsymbol{K}_j\big)}(X,Y)=
 \mathcal{H}_{\big(\boldsymbol{K}_j \, \boldsymbol{K}_i\big)}(X,Y)=0\ , \qquad
\forall \boldsymbol{K}_i,\boldsymbol{K}_j\in  \mathscr{H} , \quad\forall X, Y \in TM\ ,
\end{equation}
\end{itemize}
\end{itemize}

The assumption \eqref{eq:Hmod},  \eqref{eq:Hring}, ensure that the set $\mathscr{H}$ is an associative algebra of Haantjes operators.
In addition, if
\begin{equation}
\boldsymbol{K}_i\,\boldsymbol{K}_j=\boldsymbol{K}_i\,\boldsymbol{K}_j\, \qquad\forall \boldsymbol{K}_i,\boldsymbol{K}_j \in  \mathscr{H}\ ,
\end{equation}
the  algebra $\mathscr{H}$ will be said an Abelian Haantjes algebra.

 \end{definition}
\par

\begin{remark}\label{rem:Hdiag}
The conditions of Definition \ref{def:HM} are apparently very demanding and difficult to solve. However, a class of natural solutions is given, in a local chart
$\{ U,\boldsymbol{x}=(x_1,\ldots,x_n)\}$,  by each operator of the form

\begin{equation}\label{eq:Hdiag}
\boldsymbol{K}=\sum _{i=1}^n l_{i }(\boldsymbol{x})
\frac{\partial}{\partial x_i}\otimes \rd x_i  \ .
 \end{equation}
 The  diagonal operators $\boldsymbol{K}$ have their Haantjes tensor vanishing  and satisfy the differential compatibility condition \eqref{eq:Hmod} by virtue of Proposition \ref{pr:Hd}.  Moreover, they form
a commutative ring, therefore they fulfill Eqs. \eqref{eq:Hring}.
 In fact, such operators generate an algebraic structure that we shall call  a \emph{diagonal} Haantjes algebra.
\end{remark}

A particular but especially relevant class of Haantjes algebras  is given by
the ones generated by  a {\it single} Haantjes operator   $\boldsymbol{L}:TM\mapsto TM$. In fact,  one can construct directly a Haantjes algebra $\mathcal{L}$, of   $rank \leq n=dim(M)$, by choosing as a set of generators  the   first $(n-1)$ powers of $\boldsymbol{L}$ together with $\boldsymbol{L}^0:=\boldsymbol{I}$
\begin{equation*}  \label{algebraic}
\mathcal{L}(\boldsymbol{L}):=Span\{\boldsymbol{I} , \boldsymbol{L},\boldsymbol{L}^2, \boldsymbol{L}^{n-1}\}\ .
\end{equation*}
We shall call these algebras \emph{cyclic} Haantjes algebras. Their rank is equal to the degree of the minimal polynomial of $\boldsymbol{L}$.
\par

A natural question is to establish when  a given Haantjes algebra  can be generated by a single Haantjes  operator, giving rise to a cyclic Haantjes algebra.  This problem has been  investigated  in \cite{TT2017}  starting from the following
\begin{definition} \label{def:CHa}
Let  $(M,  \mathscr{H})$ be a Haantjes algebra of rank $m$. An  operator $\boldsymbol{L}$ will be called a cyclic generator of  $\mathscr{H}$  if
\begin{equation*}
 \mathscr{H}\equiv\mathcal{L}(\boldsymbol{L})
\end{equation*}
  The basis
 \begin{equation} \label{eq:baseCicl}
 \mathcal{B}_{cyc}=\{ \boldsymbol{I} , \boldsymbol{L},\boldsymbol{L}^2, \boldsymbol{L}^{m-1} \}
 \end{equation}
   will be called a cyclic basis of $\mathscr{H}$ and allows us to represent
each Haantjes operator $\boldsymbol{K}\in \mathscr{H} $  as a polynomial field in $\boldsymbol{L}$ of degree at most (m-1), i.e.
\begin{equation}\label{eq:Hg}
\boldsymbol{K}=p_{\boldsymbol{K} }(\boldsymbol{x},\boldsymbol{L})=\sum_{i =0} ^{m-1} a_i(\boldsymbol{x})\,\boldsymbol{L}^i \ ,
\end{equation}
where $a_i(\boldsymbol{x})$ are  smooth functions in $M$.
\end{definition}

\section{Poisson--Haantjes manifolds}
In order to deal with GZ systems, we need to  extend the notion of symplectic-Haantjes manifold ($\omega\mathcal{H}$) already introduced in \cite{TT2016prepr}. Here we propose the new notion of  Poisson--Haantjes ($P\mathcal{H}$) manifold. 
\par
As usual, the transposed operator  $\boldsymbol{K}^{T}: T^*M\rightarrow T^*M$ is  defined as the transposed linear map of $\boldsymbol{K}:TM\rightarrow TM$, with respect to the natural pairing between a vector space and its dual space.

\begin{definition}\label{def:PHman}
A Poisson--Haantjes manifold is a triple $( M,P,\mathscr{H})$ that satisfies the following conditions
\begin{itemize}
\item[i)]
M is a differentiable manifold;
\item[ii)]
$P:TM^* \rightarrow TM$ is a Poisson bivector in $M$;
\item[iii)]
$\mathscr{H}$ is an Abelian Haantjes algebra;
\item[iv)]
$(P,\mathscr{H})$ are algebraically compatible in the sense that 
$\boldsymbol{K} P=P\boldsymbol{K}^T , \ \forall \boldsymbol{K} \in \mathscr{H}$.
\end{itemize}
\end{definition}
As a consequence of the above conditions, we get the following simple proposition that turns out to be crucial for many results of the present theory.

\begin{proposition}\label{pr:ss}
In a given a $P \mathcal{H}$ manifold,  any composed operator,
 $\boldsymbol{K}_i\, P$,
 $\boldsymbol{K}_i \,P \,\boldsymbol{K}_j^T$,
   $\big( \boldsymbol{K}_\alpha-f(\boldsymbol{x})\boldsymbol{I}\big)^r P$, $r \in \mathbb{N}$,   is skew symmetric.
\end{proposition}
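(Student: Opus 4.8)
The plan is to derive each of the three skew-symmetry statements from the same three elementary ingredients: the antisymmetry of the Poisson bivector, which I read as $P^{T}=-P$ (viewing $P$ and its transpose both as maps $T^{*}M\to TM$); the contravariant behaviour of transposition under composition, $(AB)^{T}=B^{T}A^{T}$, together with $(\boldsymbol{K}^{T})^{T}=\boldsymbol{K}$; and the algebraic compatibility condition iv), $\boldsymbol{K}P=P\boldsymbol{K}^{T}$ for every $\boldsymbol{K}\in\mathscr{H}$. For a map $A:T^{*}M\to TM$, ``skew symmetric'' means $A^{T}=-A$, so in each case I simply compute the transpose and reduce it, by these three facts, back to $-A$.

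First I would treat $\boldsymbol{K}_{i}P$. Transposing and using $P^{T}=-P$ gives $(\boldsymbol{K}_{i}P)^{T}=P^{T}\boldsymbol{K}_{i}^{T}=-P\boldsymbol{K}_{i}^{T}$, and the compatibility $P\boldsymbol{K}_{i}^{T}=\boldsymbol{K}_{i}P$ then turns the right-hand side into $-\boldsymbol{K}_{i}P$, which is the claim. For $\boldsymbol{K}_{i}P\boldsymbol{K}_{j}^{T}$ the same opening step yields $(\boldsymbol{K}_{i}P\boldsymbol{K}_{j}^{T})^{T}=\boldsymbol{K}_{j}P^{T}\boldsymbol{K}_{i}^{T}=-\boldsymbol{K}_{j}P\boldsymbol{K}_{i}^{T}$; applying compatibility to $P\boldsymbol{K}_{i}^{T}$ converts this into $-\boldsymbol{K}_{j}\boldsymbol{K}_{i}P$. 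Here the decisive point, and the only place where hypothesis iii) is genuinely used, is that $\mathscr{H}$ is Abelian, $\boldsymbol{K}_{j}\boldsymbol{K}_{i}=\boldsymbol{K}_{i}\boldsymbol{K}_{j}$, after which a final use of compatibility returns $-\boldsymbol{K}_{i}P\boldsymbol{K}_{j}^{T}$.

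For the third operator I would first observe that $f\boldsymbol{I}$ trivially satisfies the compatibility relation (since $\boldsymbol{I}^{T}=\boldsymbol{I}$), so that $\boldsymbol{B}:=\boldsymbol{K}_{\alpha}-f(\boldsymbol{x})\boldsymbol{I}$ obeys $\boldsymbol{B}P=P\boldsymbol{B}^{T}$ by linearity. A short induction on $r$, using $\boldsymbol{B}^{r}P=\boldsymbol{B}^{r-1}(\boldsymbol{B}P)=\boldsymbol{B}^{r-1}P\boldsymbol{B}^{T}=\cdots=P(\boldsymbol{B}^{T})^{r}=P(\boldsymbol{B}^{r})^{T}$, then shows that the whole power $\boldsymbol{B}^{r}$ again satisfies the compatibility condition; equivalently, one may note that $\boldsymbol{B}^{r}\in\mathscr{H}$, since $\mathscr{H}$ is a module containing $\boldsymbol{I}$ and a ring under composition. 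Either way, $\boldsymbol{B}^{r}$ plays the role of $\boldsymbol{K}_{i}$ in the first part, and the opening computation applies verbatim to give $(\boldsymbol{B}^{r}P)^{T}=-\boldsymbol{B}^{r}P$. None of the steps is computationally heavy, the argument being purely formal; the only thing to watch is that the middle statement fails without the Abelian hypothesis, so I would make explicit where commutativity enters.
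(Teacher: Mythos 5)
Your proof is correct: the paper states Proposition~\ref{pr:ss} without any proof (presenting it as a direct consequence of Definition~\ref{def:PHman}), and your formal computation --- transposition, $P^{T}=-P$, the compatibility $\boldsymbol{K}P=P\boldsymbol{K}^{T}$, commutativity of $\mathscr{H}$ for the middle operator, and the induction showing $\big(\boldsymbol{K}_\alpha-f(\boldsymbol{x})\boldsymbol{I}\big)^{r}$ again satisfies the compatibility condition --- is exactly the intended argument. Your explicit flagging of where the Abelian hypothesis iii) is genuinely needed is a useful addition that the paper leaves implicit.
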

\begin{remark}
The class of $\omega\mathcal{H}$ manifolds  coincides  with the  class of $P\mathcal{H}$ manifolds of even dimension, with an invertible  Poisson bivector. In fact, as in this case $\Omega=P^{-1}$ is a symplectic operator, the compatibility condition $\boldsymbol{K} P=P\boldsymbol{K}^T $ is equivalent to the compatibility condition $\Omega\boldsymbol{K} =\boldsymbol{K}^T \Omega$, required in $\omega\mathcal{H}$ manifolds.
\end{remark}
We show a paradigmatic example of $ P \mathcal{H}$ manifold with a cyclic Haantjes algebra, that later will be used to describe a Haantjes algebra of the Lagrange top.
\begin{example} \label{ex:PN}
Let $(M,P,\boldsymbol{N})$ be a Poisson--Nijenhuis (PN) manifold, that is,  a manifold endowed with a Poisson bivector $P$ and a Nijenhuis operator 
$\boldsymbol{N}$ that satisfy the following compatibility conditions
\begin{eqnarray}
\label{eq:PNa}
\boldsymbol{N}P-P\boldsymbol{N}^T&=& 0 \\
\label{eq:PNd}
R(P,\boldsymbol{N})(\alpha,Y)&=&0 \qquad\qquad \forall \alpha \in T^*M \ , \forall Y \in TM \ ,
\end{eqnarray}
where $R(P,\boldsymbol{N})$ is the $(2+1)$ tensor field defined in \cite{MM} by 
\begin{equation}
R(P,\boldsymbol{N})(\alpha,Y)=\mathcal{L}_{P\alpha}\big(\boldsymbol{N} \big)Y-P\big ( \mathcal{L}_Y\big(\boldsymbol{N}^T\alpha\big)-\mathcal{L}_{\boldsymbol{N}Y}\alpha \big)\ ,
\end{equation}
($\mathcal{L}_Y$ denotes the Lie derivative with respect the vector field $Y$).
Let us suppose  that the Nijenhuis operator $\boldsymbol{N}$ has its minimal polynomial  of degree 
$\mathrm{m}$. 
Then, the $PN$  manifold $M$ has a \emph{standard} $ P \mathcal{H}$ structure, given by
$$
(M, P,  \boldsymbol{K}_1= \boldsymbol{I},  \boldsymbol{K}_2= \boldsymbol{N},\ldots,   , \boldsymbol{K}_{m}= \boldsymbol{N^{m-1}}) \ ,
$$
with a Haantjes algebra of rank  $\mathrm{m}\leq dim(M)$.
In fact, each Nijenhuis operator $ \boldsymbol{N}$ is also a Haantjes operator, therefore generates the  cyclic Haantjes algebra $\mathcal{L}(\boldsymbol{N})$. Moreover, the algebraic compatibility condition \eqref{eq:PNa} assures that for all Haantjes operators 
$$
\boldsymbol{K}=p_{\boldsymbol{K} }(\boldsymbol{x},\boldsymbol{N})=\sum_{i =0} ^{m-1} a_i(\boldsymbol{x})\,\boldsymbol{N}^i \ ,
$$
 the condition iv) of Def. 9 is fulfilled.
 \par
 In addition, the differential  condition \eqref{eq:PNd} implies that 
 \begin{equation}
R( P,\boldsymbol{K})(\alpha,Y)- \Bigg(\sum_{i=0}^{m-1}\bigg( \big(X_i\wedge \boldsymbol{N}^iY\big)- Y(a_i)\boldsymbol{N}^{i} P\bigg)\Bigg) \alpha=0   
\end{equation}
 $\forall j\in \mathbb{N}$, $\forall \alpha \in T^*M$, $\forall Y \in TM$, where $X_i:=P\rd{a_i}$ are the Hamiltonian vector fields with Hamiltonian functions $a_i$.
\end{example}
 We also generalize the concept of Magri--Haantjes chain, introduced in \cite{TT2016prepr} under the name of Lenard--Haantjes chains.
\begin{definition}
 Let $( M,P,\mathscr{H})$ be a Poisson--Haantjes manifold. A function  $H\in C^\infty(M)$ is said to generate  a Magri--Haantjes chain of 1-forms if
$$
\rd(\boldsymbol{K}^T_i \,\rd H)=0,  \quad\qquad i=1,\ldots ,m ,
$$
for some basis \{$\boldsymbol{K}_1,\ldots, \boldsymbol{K}_m\}$ of   $\mathscr{H}$. The (locally) exact 1-forms $\rd H_i$ such that
$$
\rd H_i=\boldsymbol{K}^T_i \,\rd H   \quad\qquad i=1,\ldots ,m ,
$$
are called the elements of the Magri--Haantjes chain, of length $m$, generated by $H$.
\end{definition}
The relevance of Magri--Haantjes chains is due to the following
\begin{lemma}
 Let $( M,P,\mathscr{H})$ be a Poisson--Haantjes manifold.  The functions $H_i$ whose differentials belong to  all Magri-Haantjes chains  generated by a single function $H$ are in involution w.r.t. the Poisson bracket defined by $P$. In fact,
\begin{equation}
\{H_i, H_j\}=<dH_i, P\, dH_j>=<\boldsymbol{K}_i^T dH,P\boldsymbol{K}^T_{j} dH>=
<dH,\boldsymbol{K}_{i} P \boldsymbol{K}_{j}^T dH>\stackrel{Prop.\, \ref{pr:ss}}{=}0
\end{equation}
\end{lemma}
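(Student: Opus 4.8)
The plan is to compute the Poisson bracket $\{H_i,H_j\}$ straight from its definition and to reduce it to the evaluation of a single covector against a skew-symmetric operator. First I would recall that the bracket associated with $P$ reads $\{H_i,H_j\}=\langle \rd H_i, P\,\rd H_j\rangle$, where $\langle\ ,\ \rangle$ denotes the natural pairing between $T^*M$ and $TM$. This is purely the definition of the Poisson structure attached to the bivector $P$.

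Next I would feed in the defining relations of a Magri--Haantjes chain. Since, by hypothesis, both $H_i$ and $H_j$ are elements of chains generated by the \emph{same} function $H$, their differentials satisfy $\rd H_i=\boldsymbol{K}_i^T\,\rd H$ and $\rd H_j=\boldsymbol{K}_j^T\,\rd H$ for basis operators $\boldsymbol{K}_i,\boldsymbol{K}_j\in\mathscr{H}$. Substituting these gives $\{H_i,H_j\}=\langle \boldsymbol{K}_i^T\,\rd H,\; P\,\boldsymbol{K}_j^T\,\rd H\rangle$. I would then use the adjointness of the transpose with respect to the pairing, namely $\langle \boldsymbol{K}^T\alpha, Y\rangle=\langle \alpha,\boldsymbol{K}Y\rangle$, to transfer $\boldsymbol{K}_i^T$ onto the other slot, arriving at $\{H_i,H_j\}=\langle \rd H,\; \boldsymbol{K}_i\,P\,\boldsymbol{K}_j^T\,\rd H\rangle$. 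At this stage the bracket has become the pairing of the one covector $\rd H$ with its own image under the composed operator $\boldsymbol{K}_i\,P\,\boldsymbol{K}_j^T$.

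The decisive step, which I expect to be the only genuinely substantive ingredient, is to invoke Proposition~\ref{pr:ss}: in a $P\mathcal{H}$ manifold the composed operator $\boldsymbol{K}_i\,P\,\boldsymbol{K}_j^T$ is skew-symmetric. For any skew-symmetric operator $A:T^*M\to TM$ one has $\langle\alpha,A\alpha\rangle=-\langle\alpha,A\alpha\rangle$, hence $\langle\alpha,A\alpha\rangle=0$ for every covector $\alpha$; applying this with $\alpha=\rd H$ forces $\{H_i,H_j\}=0$.

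Everything apart from Proposition~\ref{pr:ss} is mere bookkeeping with the pairing, so the whole weight of the argument rests on the skew-symmetry furnished by the algebraic compatibility condition iv) of Definition~\ref{def:PHman}. I would also stress the structural reason the argument closes: it is essential that both differentials descend from the \emph{same} generator $H$, so that the skew-symmetric form $\boldsymbol{K}_i\,P\,\boldsymbol{K}_j^T$ is evaluated on the diagonal $(\rd H,\rd H)$ and therefore vanishes identically.
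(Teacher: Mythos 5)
Your proposal is correct and follows exactly the same route as the paper: substitute $\rd H_i=\boldsymbol{K}_i^T\rd H$ and $\rd H_j=\boldsymbol{K}_j^T\rd H$ into $\langle \rd H_i,P\,\rd H_j\rangle$, move $\boldsymbol{K}_i^T$ across the pairing to obtain $\langle \rd H,\boldsymbol{K}_iP\boldsymbol{K}_j^T\rd H\rangle$, and conclude by the skew-symmetry of $\boldsymbol{K}_iP\boldsymbol{K}_j^T$ from Proposition~\ref{pr:ss}. Your added remark that the vanishing hinges on evaluating the skew form on the diagonal, i.e.\ on both chains descending from the same generator $H$, is a correct and worthwhile clarification of the step the paper leaves implicit.
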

\begin{definition}
  Let $( M,P,\mathscr{H})$ be a Poisson--Haantjes manifold. A  vector field $Y$ is said to generate a Magri--Haantjes chain of vector fields if the vector fields defined by 
  $$
Y_i:=\boldsymbol{K}_i \, Y  ,\qquad\qquad i=1,\ldots,m
$$ 
for some basis \{$\boldsymbol{K}_1,\ldots, \boldsymbol{K}_m\}$, commute among each others.
\end{definition}
\begin{remark} \label{rem:McXH}
Let us note that, thanks to the compatibility condition $iv)$ in Definition \ref{def:PHman},  to every Magri-Haantjes chain of 1-forms generated by a function $H$ corresponds a Magri-Haantjes chain of Hamiltonian vector fields $X_{H_i}=P\rd H_i$ generated by $X_H=P\rd H$.  Moreover, the Hamiltonian vector fields belonging to different chains generated by the same Hamiltonian vector field $X_H$ commute among each others.
\end{remark}
In \cite{TT2016prepr}, it has been shown that, given a $\omega\mathcal{H}$  manifold and a function $H$, the existence of a Magri--Haantjes chain generated by $H$ is equivalent to the Frobenius integrability of the co-distribution
\begin{equation}
\boldsymbol{K}^T_i \rd H    \qquad  \qquad  \qquad  i=1,\ldots m.
\end{equation}
In this paper, we limit ourselves to exhibit the example of the Magri-Haantjes chain of the Lagrange top, leaving the finding of  the  conditions assuring the existence of  Magri-Haantjes chains in  $P\mathcal{H}$ manifolds to future work.

\section{The Lagrange top}
The classical Lagrange top is a heavy symmetric top, that is, a symmetric rigid body with a fixed point $O$, immersed  in the uniform gravity field $\vec{\gamma}$. It  admits different geometric formulations in the framework of the bi--Hamiltonian theory
\cite{MTlt,MTltC, TSIlt}.
\begin{center}
\includegraphics[scale=0.8]{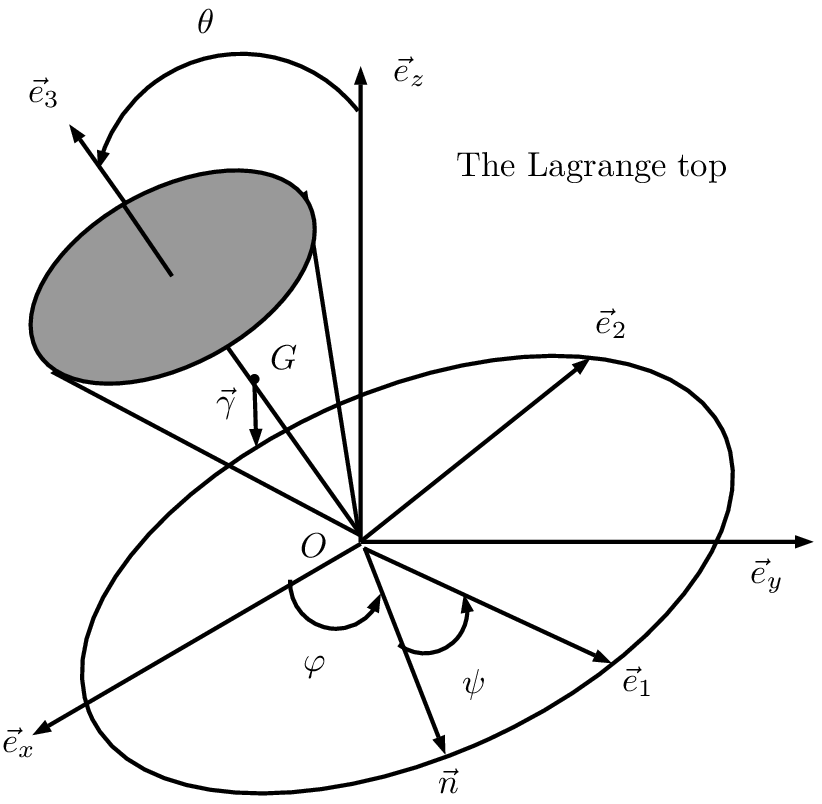}
\end{center}
\subsection{Euler angles}
In the phase space $M=T^*(SO(3))$ one can choose as local coordinates the classical  Euler angles and conjugate momenta $( \varphi,\theta,\psi,p_\varphi,p_\theta,p_\psi )$.  In such a chart, the Hamiltonian function of the Lagrange top takes the form 
\begin{equation} \label{eq:HL}
H=\frac{1}{2\mathcal{A}} \bigg(p_\theta^2+\frac{(p_\varphi-p_\psi\cos\theta)^2}{\sin^2\theta}+\frac{1}{c}p_\psi^2\bigg)+\mu ga \cos\theta
\end{equation}
and the Hamilton equations of the motion are
\begin{align*}
   \dot{\varphi}  =&\frac{1}{\mathcal{A}} \frac{p_\varphi-p_\psi\cos\theta }{\sin^2\theta}   \\
    \dot{\theta}  =&\frac{1}{\mathcal{A}}\, p_\theta \\
   \dot{\psi}  =&\frac{1}{c\mathcal{A}} \,p_\theta \\
   \dot{p}_\varphi=&0\\
   \dot{p}_{\theta}=&-\frac{\partial}{\partial \theta}\,\frac{(p_\varphi-p_\psi\cos\theta)^2}{\sin^2\theta}+\mu ga\sin\theta\\
  \dot{p}_\psi  =&0 \ ,
\end{align*}
where $\mathcal{A}$ and $c \,\mathcal{A}$ are, respectively, the inertia momenta w.r.t. every axis in the equatorial plane and the symmetry axis,   $\mu$ is the mass of the top, $g$ the gravity acceleration, $a$ the coordinate of the mass center $G$ along the symmetry axis.
It is evident from the Hamiltonian function and the equations of the motion that the Lagrange top admits the three integrals of motion
\begin{equation} \label{eq:HiL|}
H_1=H\ ,\qquad H_2=p_\varphi \ ,\qquad H_3=p_\psi \ ,
\end{equation}
that are, the energy and the components of the angular momentum along the vertical  and the symmetry axis, respectively. Moreover, it is well known that the coordinates $( \varphi,\theta,\psi,p_\varphi,p_\theta,p_\psi )$ are separation variables for the Hamilton-Jacobi equation associated to $H$. From our point of view,  it is worth of interest to show that even this very classical system, as well as every Hamiltonian separable system, can be described in the framework of $\omega \mathcal{H}$  Haantjes manifolds,   according to Theorem 57 of \cite{TT2016prepr}.  A Haantjes algebra for the Lagrange top can be easily computed, whose basis is  

\begin{eqnarray*} \label{eq:LSoV}
\boldsymbol{K}_1&=&\boldsymbol{I}
 \\
\boldsymbol{K}_2&=& \frac{\mathcal{A} \sin ^2\theta}{p_\varphi-p_\psi\,\cos\theta}\bigg (\frac{\partial}{\partial \varphi}\otimes \rd \varphi+\frac{\partial}{\partial p_\varphi}\otimes \rd p_\varphi \bigg ) \\
\boldsymbol{K}_3&=&- \frac{\mathcal{A} \sin ^2\theta}{\cos\theta(p_\varphi-p_\psi\,\cos\theta)}\bigg (\frac{\partial}{\partial \theta}\otimes \rd \theta+\frac{\partial}{\partial p_\theta}\otimes \rd p_\theta \bigg ) \ .
\end{eqnarray*}
In fact, the action of the (transpose of the) Haantjes operators $\boldsymbol{K}_1, \boldsymbol{K}_2,\boldsymbol{K}_3$ on the gradient of the Hamiltonian function \eqref{eq:HL} produces the Magri--Haantjes chain of the (gradients of the) three integrals of motion
$$
\boldsymbol{K}_1^T \rd H= \rd H_1 \ ,\qquad\boldsymbol{K}_2^T \rd H=\rd H_2\ ,\qquad \boldsymbol{K}_3^T\rd H=\rd H_3 \ .
$$
The fact that the Haantjes operators take a diagonal form in the Euler chart $( \varphi,\theta,\psi,p_\varphi,p_\theta,p_\psi )$, means that the Haantjes algebra is diagonal and  the Euler coordinates are Darboux-Haantjes coordinates for it.

 More interesting and hard to solve is the problem to construct a Haantjes algebra directly in the physical coordinates, and, afterwards determining a set of Darboux--Haantjes coordinates that are separation variables according to Theorem 59 of \cite{TT2016prepr}. In the next Section, we will show how to proceed in this example, starting with  the tri-Hamiltonian formulation of the Lagrange top and reducing it to the symplectic leaves of one of its Poisson bivectors.
\subsection{Euler--Poisson coordinates}
An alternative formulation of the Lagrange top is based on the Euler-Poisson equation that are, roughly speaking,  the equation of the motion projected onto the comoving reference frame $(\vec{e}_1,\vec{e}_2,\vec{e}_3)$.
Let us consider the phase space $M:=\{m | m=(\vec{\omega},\vec{\gamma})\}$, where 
$\vec {\omega}= \omega_1\vec{e}_1+ \omega_2 \vec{e}_2+ \omega_3\vec{e}_3$ is the angular velocity of the top, 
$\vec{\gamma} =\gamma_1\vec{e}_1+\vec{e}_2 \gamma_2+\gamma_3 \vec{e}_3$    the vertical unit vector (the Poisson vector),
$G-O=a\vec{e}_3$  the vector of the center of mass and  $J=\mbox{diag}(\mathcal{A}, \mathcal{A}, c\mathcal{A})$ the principal inertia matrix. 
In such a formulation, the equations of the motion $ \dot{m}=X_L(m)$ are the Euler equations coupled with the Poisson equations, so that the Lagrange vector field is given by
\begin{equation} \label{eq:XL}
X_L(m)= \left[
\begin{array}{c}
(1-c) \omega_2 \omega_3-\gamma_2 \\
-(1-c) \omega_3 \omega_1+\gamma_1 \\
0\\
\gamma_2 \omega_3- \gamma_3 \omega_2\\
\gamma_3 \omega_1- \gamma_1 \omega_3\\
\gamma_1 \omega_2- \gamma_2 \omega_1\\
\end{array}
\right] \ ,
\end{equation}
where the normalization $\mu  g a/\mathcal{A}=1$ has been chosen.
\subsubsection{The tri-Hamiltonian formulation of the Lagrange top}
The vector field \eqref{eq:XL} admits a tri-Hamiltonian formulation (see \cite{MTlt} and reference therein) w.r.t. the three non invertible Poisson bivectors
$$
P_0= \left[
\begin{array}{cc}
0  & B \\
B & C \\
\end{array}
\right]
\ , \ 
P_1= \left[
\begin{array}{cc}
-B  & 0 \\
0 & \Gamma \\
\end{array}
\right]
\ ,\ 
P_2= \left[
\begin{array}{cc}
T  & R \\
-R^T & 0 \\
\end{array}
\right] \ ,
$$ 
where
$$
B= \left[
\begin{array}{ccc}
       0  & -1 & 0 \\
1 & 0 & 0 \\
0 & 0& 0 \\
\end{array}
\right]
\ , \ 
C= \left[
\begin{array}{ccc}
       0  & c ~\omega_3 & -\omega_2 \\
- c ~\omega_3 & 0 & \omega_1 \\
\omega_2 & -\omega_1 & 0 \\
\end{array}
\right]
$$
$$
\Gamma = \left[
\begin{array}{ccc}
       0  & \gamma_3 & -\gamma_2 \\
-\gamma_3 & 0 & \gamma_1 \\
\gamma_2 & -\gamma_1 & 0 \\
\end{array}
\right]
\ ,\ 
R = \left[
\begin{array}{ccc}
       0  & -\gamma_3 & \gamma_2 \\
\gamma_3 & 0 & -\gamma_1 \\
-\gamma_2/ c & \gamma_1 /c & 0 \\
\end{array}
\right]
$$
$$
T= \left[
\begin{array}{ccc}
       0  & - c~ \omega_3 & \omega_2/ c \\
       c~ \omega_3 & 0 & - \omega_1 / c \\
- \omega_2 / c & \omega_1/ c & 0 \\
\end{array}
\right]\ ,
$$
and the Hamiltonian functions $(h_0,h_1,h_2)$
$$
\xymatrix{
dh_0  
\ar[dr]_{P_0}
&dh_1  
\ar[d]|{P_1}
&dh_2 
\ar[dl]^{P_2} \\
&X_L& 
}
$$
defined by
$$
h_0=\frac{1}{2} F_4+(c-1) F_1 F_3, \quad
h_1= -F_3-(c-1) F_1 F_2, 
\quad
h_2=F_2 \ .
$$ 
The functions $F_1$, $F_2$, $F_3$, $F_4$ are the integrals of motion given by
$$
F_1=\omega_3, \quad
F_2=\frac{1}{2}(\omega_{1}^2+\omega_{2}^2+c~ \omega_{3}^2)-\gamma_3, 
$$
$$
F_3= \omega_1\gamma_1+\omega_2\gamma_2+ c~ \omega_3\gamma_3~,
\quad F_4= \gamma_{1}^2+\gamma_{2}^2+ \gamma_{3}^2~.
$$
The three Poisson bivectors $(P_0,P_1,P_2)$ generate three Poisson pencils
$$
P_0-\lambda P_1 \ , \qquad P_1-\lambda P_2\ ,\qquad P_0-\lambda P_2 \ ,
$$
that possess two polynomial Casimir functions each \cite{MTltC}. Below, we concentrate on the Casimir function of the Poisson pencil $P_0-\lambda P_1$. 
\subsubsection{A Gelfand--Zakharevich system of co--rank 2}
  The Poisson pencil $P_0-\lambda P_1$ possess  two polynomial Casimir functions $H(\lambda)^{(1)}=H_0^{(1)}=F_1$ and $H(\lambda)^{(2)}=H_0^{(2)} \lambda^2+H_1^{(2)}\lambda+H_2=\frac{F_4}{2} \lambda^2-F_3\lambda+F_2$,
of length $1$ and $3$ respectively. They    can be represented graphically in the following way
$$
\xymatrix{
 &dF_1
\ar[dl]_{P_0} \ar[dr]^{P_1}&\\
0&&0
}
$$
\begin{equation}\label{eq:2BHchain}
\xymatrix{
&\rd F_4/2 \ar[dl]_{P_1} \ar[dr]^{P_0}
& &\rd(-F_3)\ar[dl]_{P_1} \ar[dr]^{P_0}
& &\rd(F_2)\ar[dl]_{P_1} \ar[dr]^{P_0}& \\
0& &X_1& &X_2& &0&
} \ .
\end{equation}
The vector fields $X_1,X_2$ are bi--Hamiltonian vector fields as 
\begin{equation} \label{eq:XiBH}
X_i=P_0 \rd H_{i-1}^{(2)}=P_1  \rd H_{i }^{(2)}\ ,\qquad i=1, 2 \ .
\end{equation}
Moreover, the vector field \eqref{eq:XL} of the Lagrange top can be formulated as
\begin{equation}\label{eq:XLGZ2}
X_L=X_1-(c-1)F_1\, X_2 \ ,
\end{equation}
therefore defining  a system of Gelfand--Zakharevich type of co--rank $2$.
\subsubsection{The reduction of the Poisson pencil}
Without loss of generality, we fix a Poisson bivector inside the Poisson pencil, say $P_1$, and, in order to getting rid of its Casimir functions, we perform a reduction to its symplectic leaves
\begin{equation} \label{eq:S1}
S_1:=\{ F_1=C_1, F_4=C_4\}\ . 
\end{equation}
To this aim, it is convenient to introduce complex coordinates in $M$ adapted to such a reduction \cite{MTlt}
$$
x_1=-c \omega_3+i\omega_2 \ , \quad x_2=\gamma_3-i\gamma_2 \ ,\quad y_1= \omega_1\ ,\quad y_2 =-\gamma_1\ , \quad F_1\ , \quad F_4 \ ,
$$
 in which the Poisson bivectors take the following simple form
 $$
P_0=
\left[
\begin{array}{c|c}
 \check{P}_0&
\begin{array}{c|c}
0\quad & 2 X_1^1 \\
0 \quad&2  X_1^2\\
0\quad&2 X_1^3\\
0\quad&2 X_1^4 \\
\end{array}\\
 \hline
\begin{array}{cccc}
0 & 0&0&0 \\
-2X_1^1 & -2X_1^2&-2X_1^3&-2X_1^4
\end{array}&
 \begin{array}{c|c}
\!\!\! \!\!0 &2 X_1^5 \\
 \!\!\!\!-2X_1^5&0
 \end{array}
 \end{array}
 \right ] \ ,
\quad
\check{P}_0=-i
\left[
\begin{array}{cc|cc}
0&0&0&1\\
0&0&1&-x_1\\
\hline
0&-1&0&0\\
-1&x_1&0&0
\end{array}
\right ] 
$$

$$
P_1=
\left[
\begin{array}{c|c}
\check{P}_1&
\begin{array}{c|c}0 & 0 \\
0&0\\
0&0\\
0 & 0
\end{array}\\
 \hline
\begin{array}{cccc}
0 & 0 &0&0\\
0 & 0&0&0
\end{array}&
 \begin{array}{c|c}
 0 & 0 \\
0 & 0
 \end{array}
 \end{array}
 \right ] \ ,
\quad
\check{P}_1=-i
\left[
\begin{array}{cc|cc}
0&0&1&0\\
0&0&0&x_2\\
\hline
-1&0&0&0\\
0&-x_2&0&0
\end{array}
\right ] \ .
$$
It is evident, that the Poisson bivector $P_0$ cannot be restricted to $S_1$, unlike $P_1$. So, we perform a reduction procedure introduced in \cite{FPrmp}. It is a highly non trivial generalization of the Marsden--Ratiu method \cite{MR} and consists in an  ingenious deformation of the Poisson bivector  $P_0$, by means of a suitable family of vector fields transversal to the symplectic leaves of $P_1$. Such a deformation assures that the  deformed Poisson pencil shares the same \textit{axis}  with the old one  and can be restricted to $S_1$.
\subsubsection{Deformation}
We choose the two vector fields 
 $$
 Z_1=\frac{\partial}{\partial F_1}\ ,\qquad Z_2=2 \frac{\partial}{\partial F_4}
 $$
normalized as 
$$
Z_i (H_0^{(j)})=\delta_i^j \ , \qquad i,j=1,2 \ .
$$
As they fulfill the equations 
$$ 
\mathcal{L}_{Z_i}(P_1)=0 \qquad \mathcal{L}_{Z_i}(P_0)=[Z_i,X_1]\wedge Z_2 \qquad i=1,2 \ ,
$$
they  can deform  the  Poisson bivector $P_0$ into the new Poisson bivector 
\begin{equation}\label{eq:Qdef}
Q:=P_0-X_1\wedge Z_2 \ , \quad Q=
\left[
\begin{array}{c|c}
 \check{P}_0&
\begin{array}{c|c}0 & 0 \\
0&0\\
0&0\\
0 & 0
\end{array}\\
 \hline
\begin{array}{cccc}
0 & 0 &0&0\\
0 & 0&0&0
\end{array}&
 \begin{array}{c|c}
 0 & 0 \\
 0 & 0
 \end{array}
 \end{array}
 \right ] \ ,
\end{equation}
that  can be restricted to $S_1$ and its restriction is $\check{Q}=\check{P_0}$.
\par
\subsubsection{Haantjes algebra } \label{sec:HaLT} 
There are different manners of endowing the manifold $M$ with a Haantjes algebra compatible with $P_1$. In this section, we limit ourselves to show one of them, leaving a more general  discussion to future work.  In order to construct a Haantjes algebra  for the bi--Hamiltonian chain \eqref{eq:2BHchain}, which will be preserved after  the restriction to $S_1$, we look for a Nijenhuis operator $\boldsymbol{N}$ that plays the role of a cyclic generator for a Haantjes algebra of low rank, that is  $2\leq  m\leq 3$. Precisely, we require that $\boldsymbol{N}$:
\begin{itemize}
\item[i)]
factorizes the deformed Poisson bivector 
\begin{equation} \label{eq:QNP}
Q=\boldsymbol{N} P_1 \ ;
\end{equation}
\item[ii)]
has its restriction $\boldsymbol{\check{N}} $ to $S_1$ equal to
\begin{equation} \label{eq:Ncheck}
\boldsymbol{\check{N}} =\check{Q}\check{P}_1^{-1}=\check{P}_0\check{P}_1^{-1} \ ;
\end{equation}
\item[iii)]
 is a cyclic generator of a Haantjes algebra that provides the following Magri--Haantjes  chain of vector fields generated by $X_1$ 
 \begin{equation} \label{eq:MHcampi}
 \boldsymbol{K}_i X_1=X_i \ , \qquad i=1,2,3 \ ,
 \end{equation}
where $X_3=0$.
 \end{itemize}
 In other words, denoted with $\Phi$ the immersion of $S_1$ in $M$, and with $\Phi_*$ and $\Phi^*$ its pullback and pushforward respectively, we search for an  operator $\boldsymbol{N}: TM\rightarrow TM$ that solves the system
\begin{eqnarray}
\label{eq:QNP1}
\boldsymbol{N}P_1&=&P_0-X_1\wedge Z_2\ \\
\label{eq:Nrest}
\Phi_*\boldsymbol{N}\Phi^*&=&\check{P}_0\check{P}_1^{-1} \\
\label{eq:MHX1}
(d_1 \boldsymbol{I}+e_1 \boldsymbol{N}+f_1 \boldsymbol{N}^2 ) X_1&=&X_1 \\
\label{eq:MHX2}
(d_2 \boldsymbol{I}+e_2 \boldsymbol{N}+f_2 \boldsymbol{N}^2 ) X_1&=&X_2 \\
\label{eq:MHX3}
(d_3 \boldsymbol{I}+e_3 \boldsymbol{N}+f_3 \boldsymbol{N}^2 ) X_1&=&X_3 \\
\label{eq:NtorsionLT}
\tau(\boldsymbol{N})&=&0 \ ,
\end{eqnarray}
together with the unknown functions $(d_i,e_i,f_i) ,\  i=1,2,3$. 
Such a system can be decoupled  in order to  firstly  determine the unknown functions $(d_i,e_i,f_i)$.
In fact, applying both terms of Eq. \eqref{eq:QNP1} to the gradients of the bi--Hamiltonian chain \eqref{eq:2BHchain}, and    taking into account  the fact that  the Hamiltonian functions $H_j$ are integrals of motion for $X_1$, one finds that the bi-Hamiltonian vector fields  \eqref{eq:XiBH} must  fulfill the system
\begin{equation} \label{eq:sysK}
\boldsymbol{N}X_j=X_{j+1}+\cancel{X_1(H_j)}\,Z_2-Z_2(H_j^{(2)})\,X_1\ , \qquad X_0:=0 \ ,\qquad j=0,1,2 \ .
\end{equation}
   By solving recursively such a  system  w.r.t. $X_{j+1}$, only in terms of the  vector field $X_1$, a unique solution is found  for the unknown functions $(d_i,e_i,f_i)$ in Eqs \eqref{eq:MHX1}, \eqref{eq:MHX2}, \eqref{eq:MHX3}. This solution can be written down  in a compact form as   
\begin{equation} \label{eq:BenRelM70}
\begin{bmatrix}
  \boldsymbol{K}_1   \\
     \boldsymbol{K}_2 \\
  \boldsymbol{ K}_3  \\
\end{bmatrix}
=
\begin{bmatrix}
1&0&0\\
   Z_2(-F_3) &1&0 \\
Z_2(F_2)& Z_2(-F_3) &1\\
\end{bmatrix}
\begin{bmatrix}
     \boldsymbol{I}   \\
     \boldsymbol{ N }  \\
    \boldsymbol{  N}^2   \\
 \end{bmatrix}
 \ .
\end{equation}
Summarizing, the  operators \eqref{eq:BenRelM70} are Haantjes operators that  provide the Magri--Haantjes  chain \eqref{eq:MHcampi} for any solution $\boldsymbol{N}$ of Eqs. \eqref{eq:QNP1}, \eqref{eq:Nrest} and \eqref{eq:NtorsionLT}.  A simple solution of  Eqs. \eqref{eq:QNP1} and \eqref{eq:Nrest}, which leaves invariant both $TS_1$ and $Span\{Z_1,Z_2\}$, is 
\begin{equation}\label{eq:Ncand}
\boldsymbol{N}=
\left[
\begin{array}{c|c}
 \boldsymbol{\check{N}}&
\begin{array}{c|c}0 & 0 \\
0&0\\
0&0\\
0 & 0
\end{array}\\
 \hline
\begin{array}{cccc}
0 & 0 &0&0\\
0 & 0&0&0
\end{array}&
 \begin{array}{c|c}
\boldsymbol{N}_5^5 & \boldsymbol{N}_6^5\\
 \boldsymbol{N}^6_5 & \boldsymbol{N}^6_6
 \end{array}
 \end{array}
 \right ], \ 
\boldsymbol{\check{N}}=\check{P}_0\check{P}_1^{-1}=
\left[
\begin{array}{cc|cc}
0&\frac{1}{x_2}&0&0\\
1&-\frac{x_1}{x_2}&0&0\\
\hline
0&0&0&\frac{1}{x_2}\\
0&0&1&-\frac{x_1}{x_2}
\end{array}
\right ]  \ ,
\end{equation}
were $(\boldsymbol{N}_5^5,\boldsymbol{N}_6^5,\boldsymbol{N}_5^6,\boldsymbol{N}_6^6)$ are arbitrary functions to be determined requiring that the Nijenhuis torsion of $ \boldsymbol{N}$ identically vanishes.  However, we can postpone the solution of the PDE  \eqref{eq:NtorsionLT}, adding to the system \eqref{eq:QNP1}--\eqref{eq:NtorsionLT} a further \emph{algebraic} request,  namely that $\boldsymbol{N}$  satisfies the following Magri--Haantjes  chain of $1$-forms,
\begin{equation}\label{eq:MHgrad}
(Z_2(-F_3)  \boldsymbol{I}+\boldsymbol{N}\big)^T \rd (-F_3)=\rd F_2 \ ,
 \big(Z_2(F_2) +Z_2(-F_3)\boldsymbol{N}+\boldsymbol{N}^2\big)^T\rd (-F_3)=0 \ .
\end{equation} 
 Eqs. \eqref{eq:MHgrad} are independent of Eqs. \eqref{eq:MHX1}--\eqref{eq:MHX3} as, due to the kernel of the Poisson operator, a Magri--Haantjes chain of Hamiltonian vector fields   do not imply that the Magri--Haantjes chain of the gradients of their Hamiltonian functions holds true.  Thus, substituting \eqref{eq:Ncand} into Eqs. \eqref{eq:MHgrad}, we find the 
unique solution 
$$
\boldsymbol{N}_5^5=\frac{(c-1)F_1+x_1} {\Delta} \ ,\quad \boldsymbol{N}_6^5=\frac{1} {2\,c\,x_2\,\Delta}
$$
$$
\boldsymbol{N}_5^6=-\frac{2cx_2\big((c-1)(F_1^2+F_1x_1)-x_2\big)} {\Delta}
$$
$$
\boldsymbol{N}_6^6=-\frac{x_1^3+(c-1)F_1 x_1^2+2 x_1 x_2+(c-1)F_1 x_2} {x_2\, \Delta} \ , 
$$
where $\Delta=x_1^2+(c-1)F_1 x_1+x_2$. Now, we can  verify that it satisfies also  Eq. \eqref{eq:NtorsionLT}, therefore we have got a solution of the problem \eqref{eq:QNP}--\eqref{eq:MHcampi}, which solves also Eqs. \eqref{eq:MHgrad}.
  It has its minimal polynomial  
of degree $2$
\begin{equation}\label{eq:mpN}
m_{\boldsymbol{N}}(\lambda,\boldsymbol{x})=\lambda^2 +\frac{x_1}{x_2}\lambda-\frac{1}{x_2} =\lambda^2 +Z_2(-F_3)\lambda+Z_2(F_2)\ ,
\end{equation}
therefore it is a cyclic  generator of  a Haantjes algebra of rank $2$.  Moreover, the coefficients of \eqref{eq:mpN} coincides with the elements of the last row in the matrix of \eqref{eq:BenRelM70}, consequently  $\boldsymbol{K}_3=0$.

\subsubsection{Restriction to the symplectic leaves of $P_1$}
From the previous steps it follows
\begin{proposition}\label{pr:S6}
The deformed Poisson pencil $Q-\lambda P_1$ restricts to the symplectic leaves of $P_1$. 
Moreover,  $ \boldsymbol{N}$, $ \boldsymbol{K}_1$, $ \boldsymbol{K}_2$,  the Hamiltonian functions and the Hamiltonian vector fields $(X_1,X_2)$ restrict as well. They endow the symplectic leaves of $P_1$ with a $\omega \mathcal{H}$ structure and   the Magri--Haantjes chain of exact forms, given by
\begin{equation} \label{eq:MHcF3F2}
\qquad \boldsymbol{\check{K}}_1^T\, \rd (-F_3)_{|{S_1}}=\rd (-F_3)_{|{S_1}}\ ,\qquad \boldsymbol{\check{K}}_2^T\, \rd (-F_3)_{|{S_1}}=\rd F_{2_{|{S_1}}} \ ,
\end{equation}
 in virtue of   \eqref{eq:MHgrad} .
\end{proposition}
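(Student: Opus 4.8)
The plan is to treat the statement as the assembly of the structures already built in the preceding subsections, verifying in turn that each ingredient is tangent to, respectively invariant on, the leaf $S_1$ of \eqref{eq:S1}, and that the reduced data satisfy the axioms of a $\omega\mathcal{H}$ manifold. First I would dispose of the pencil. In the adapted complex coordinates the deformed bivector $Q$ has the block form \eqref{eq:Qdef}, in which the rows and columns along the transversal directions $\partial/\partial F_1,\partial/\partial F_4$ vanish; hence $Q$ is tangent to $S_1$ with restriction $\check{Q}=\check{P}_0$, and the same block inspection gives the tangency of $P_1$ with restriction $\check{P}_1$. Since tangency is stable under linear combinations, $Q-\lambda P_1$ restricts to $S_1$ for every $\lambda$, its restriction being the pencil $\check{P}_0-\lambda\check{P}_1$.

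Next I would show that the operators and the dynamical data descend. By the block structure \eqref{eq:Ncand}, $\boldsymbol{N}$ leaves $TS_1$ invariant, so it admits the well-defined restriction $\boldsymbol{\check{N}}=\check{P}_0\check{P}_1^{-1}$ already imposed in \eqref{eq:Nrest}. Because $\boldsymbol{K}_1=\boldsymbol{I}$ and $\boldsymbol{K}_2=Z_2(-F_3)\boldsymbol{I}+\boldsymbol{N}$ are polynomials in $\boldsymbol{N}$ with coefficient functions that themselves restrict, they inherit restrictions $\boldsymbol{\check{K}}_1=\boldsymbol{I}$ and $\boldsymbol{\check{K}}_2$, whereas $\boldsymbol{K}_3=0$ by \eqref{eq:mpN}, so no third generator survives. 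The Hamiltonian functions $F_2,F_3$ (and $F_4/2$) restrict as functions, while $X_1,X_2$, being $P_1$-Hamiltonian by \eqref{eq:XiBH}, annihilate the Casimirs $F_1,F_4$ of $P_1$, are therefore tangent to $S_1$, and restrict as well.

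I would then verify the $\omega\mathcal{H}$ axioms on the four-dimensional leaf. As $S_1$ is a symplectic leaf of $P_1$, the restriction $\check{P}_1$ is invertible, so $\check\Omega:=\check{P}_1^{-1}$ is a symplectic operator. The restricted operator $\boldsymbol{\check{N}}$ is Nijenhuis, since its Nijenhuis torsion is the restriction to the invariant submanifold $S_1$ of $\tau(\boldsymbol{N})$, which vanishes by \eqref{eq:NtorsionLT}; hence $\boldsymbol{\check{N}}$ is a Haantjes operator and, carrying the minimal polynomial \eqref{eq:mpN} of degree $2$, generates a cyclic Haantjes algebra $\mathcal{L}(\boldsymbol{\check{N}})$ of rank $2$ (Proposition \ref{pr:Lpowers}). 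The compatibility iv) of Definition \ref{def:PHman} holds because $\boldsymbol{\check{N}}\check{P}_1=\check{P}_0$, together with the skew-symmetry of $\check{P}_0,\check{P}_1$, gives $\check{P}_1\boldsymbol{\check{N}}^T=\check{P}_1(\check{P}_1^{-1}\check{P}_0)=\check{P}_0=\boldsymbol{\check{N}}\check{P}_1$, and this identity propagates to every power of $\boldsymbol{\check{N}}$, hence to all of $\mathcal{L}(\boldsymbol{\check{N}})$, exactly as in Example \ref{ex:PN}. By the Remark of Section 4 identifying even-dimensional invertible $P\mathcal{H}$ manifolds with $\omega\mathcal{H}$ manifolds, $(S_1,\check\Omega,\mathcal{L}(\boldsymbol{\check{N}}))$ is a $\omega\mathcal{H}$ manifold. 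Finally, pulling the $1$-form identities \eqref{eq:MHgrad} back along the immersion $\Phi$ — legitimate precisely because $\boldsymbol{N}$, and therefore $\boldsymbol{K}_2$, preserves $TS_1$ — yields $\boldsymbol{\check{K}}_2^T\,\rd(-F_3)_{|S_1}=\rd F_{2|S_1}$, while $\boldsymbol{\check{K}}_1^T\,\rd(-F_3)_{|S_1}=\rd(-F_3)_{|S_1}$ is trivial; these are exactly the chain \eqref{eq:MHcF3F2}.

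The hard part will be the bookkeeping of naturalness under reduction, that is, justifying that restriction commutes with each operation used: that the Nijenhuis torsion of $\boldsymbol{\check{N}}$ coincides with the restriction of $\tau(\boldsymbol{N})$, and that $\Phi^*(\boldsymbol{K}_i^T\alpha)=\boldsymbol{\check{K}}_i^T(\Phi^*\alpha)$. Both rest on the single invariance property $\boldsymbol{N}(TS_1)\subseteq TS_1$ recorded in \eqref{eq:Ncand}, so once that block structure is in hand the remaining verifications are formal.
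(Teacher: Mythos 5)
Your proposal is correct and follows the same route as the paper, which simply asserts the proposition "from the previous steps": the block structure of $Q$, $P_1$ and $\boldsymbol{N}$ in the adapted coordinates gives restrictability, invariance of $TS_1$ under $\boldsymbol{N}$ makes the torsion, the compatibility $\boldsymbol{\check{N}}\check{P}_1=\check{P}_1\boldsymbol{\check{N}}^T$ and the pullback of \eqref{eq:MHgrad} all descend, and the invertibility of $\check{P}_1$ on the leaf turns the $P\mathcal{H}$ data into a $\omega\mathcal{H}$ structure. You merely make explicit the naturality checks the paper leaves implicit.
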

In particular, the relations (\ref{eq:BenRelM70}) restricts to the Benenti relations 
$$
\boldsymbol{\check{K}}_1=\boldsymbol{\check{I}}\ , \quad \boldsymbol{\check{K}}_2={\frac{x_1}{x_2} }\boldsymbol{\check{I}}+\boldsymbol{\check{N}}  \ ,\quad
$$
as can be immediately seen from the analysis of the minimal polynomial of $\boldsymbol{\check{N}}$ 
that is still equal to \eqref{eq:mpN}. Such relations are satisfied by the so-called $\boldsymbol{L}$-systems \cite{Ben92,Ben93,BenAAM}, proved  to be projections of Quasi--Bi--Hamiltonian  \cite{MT,MTPLA,MTRomp} systems constructed in Riemannian manifolds \cite{TT2016}. 

\subsubsection{Separation of variables}
Let us construct in $S_1$ a set of Darboux-Haantjes (DH) coordinates for the $\omega \mathcal{H}$ manifold $(S_1,P_1,\boldsymbol{\check{K}}_1,\boldsymbol{\check{K}}_2)$.  As the $\omega\mathcal{H}$ structure, in this case, is equivalent to a $\omega N$ structure, as DH coordinates we can take just a set of Darboux-Nijenhuis coordinates \cite{FP}. To this aim, we proceed according to the Remark 70, Sect. 8 of \cite{TT2016prepr}. Firstly, we choose as  first two  coordinates $(\lambda_1,\lambda_2)$  just the two (double) eigenvalues  of the Haantjes operator  
$\boldsymbol{\check{K}}_2$
$$
\lambda_1= \frac{x_1-\sqrt{x_1^2+4x_2}}{2 x_2} \ ,\qquad
\lambda_2= \frac{x_1+\sqrt{x_1^2+4x_2}}{2 x_2} \ ,
$$
as their gradients are eigenforms of $\boldsymbol{\check{K}}_2^T$
$$
\boldsymbol{\check{K}}_2^T\rd \lambda_1=\lambda_2\, \rd \lambda_1 \ ,\qquad
\boldsymbol{\check{K}}_2^T\rd \lambda_1=\lambda_2\, \rd \lambda_1 \ .
$$
Let us note that $(\lambda_1,\lambda_2)$ are also the only eigenvalues of $\boldsymbol{K}_2=Z_2(-F_3)\boldsymbol{I}+\boldsymbol{N}$,  which, therefore, can be used to find half of the separation variables. 
\par
Further, we complete them with a pair of conjugate momenta 
$$
\mu_1= \frac{1}{\lambda_1} (\lambda_2 y_1+y_2)
\qquad
\mu_2=\frac{1}{\lambda_2} (\lambda_1 y_1+y_2)
$$
whose gradients  are also eigenforms of $\boldsymbol{\check{K}}_2$ 
$$
\boldsymbol{\check{K}}_2^T\rd \mu_1=\lambda_2\, \rd \mu_1 \ ,\qquad
\boldsymbol{\check{K}}_2^T\rd\mu_1=\lambda_2\, \rd \mu_2 \ .
$$
The local chart  $(\lambda_1,\lambda_2,\mu_1,\mu_2)$ so constructed is a Darboux chart for the Poisson operator $\check{P}_1$ and a Haantjes chart for the Haantjes operator $\boldsymbol{\check{K}}_2$. In fact, in such chart, they take the form 
$$
\check{P}_1=i
\left[\begin{array}{cc|cc}
0 & 0 & 1 & 0 \\
0 & 0 & 0 & 1 \\\hline 
-1 & 0 & 0 & 0 
\\ 0 & -1 & 0 & 0
\end{array}\right] 
\ ,\quad
\boldsymbol{\check{K}}_2=
\left[\begin{array}{cc|cc}
\lambda_2 & 0 & 0 & 0 \\ 
0 & \lambda_1 & 0 & 0\\\hline 
0 & 0 & \lambda_2 & 0 \\
0 & 0 & 0 & \lambda_1 \ .
\end{array}\right]
$$ 
(To be more precise, in order to have a Darboux--Haantjes chart one should eliminate the factor $i$ in the form of $\check{P_1}$ above, by means of the map $\lambda\mapsto i \lambda$, $\mu\mapsto \mu$.)
\par
As a consequence of  Theorem  59 in \cite{TT2016prepr}, the coordinates $(\lambda_1,\lambda_2,\mu_1,\mu_2)$ are separation variables for the functions  $F_2$ and $F_3$.  Furthermore, they are  also separation variables for the (restriction to $S_1$ of the)  Hamiltonian function of the Lagrange top 
\begin{equation}\label{eq:HLrid}
h_{1{|S_1}}=-F_{3{|S_1}}-(c-1)C_1 F_{2_{|S_1}} \ .
\end{equation}
This fact can be proved by means of the Benenti test \cite{Ben80}, or simply  by observing that, thanks to Eq. \eqref{eq:MHcF3F2}, it holds true that 
$$
\rd h_{1{|S_1}}=(-\boldsymbol{\check{I}}+(c-1)C_1 \boldsymbol{\check{K}}_2^T\big)\rd F_{3{|S_1}}\ .
$$
Therefore the  function $h_{1{|S_1}}$ belongs to a Magri--Haantjes chain generated by $F_{3{|S_1}}$. Consequently, according to    Theorem  59 in \cite{TT2016prepr}, also $h_{1{|S_1}}$ is separable in any $DH$ local chart.

\section{Future Perspectives}
It would be interesting to construct a Poisson--Haantjes algebra for  the Poisson pencil $P_2-\lambda P_1$ of the Lagrange top, that unlike  the pencil $P_0-\lambda P_1$,  has two polynomial Casimir functions  of the same length 2.  Moreover, a Haantjes algebra for  the stationary flows \cite{FMT} of the Boussinesq hierarchy, that also are GZ systems of corank 2, should be worked out.

\section*{Acknowledgement}
\noi The author  is a  member of the Gruppo Nazionale di Fisica Matematica (GNFM) of INDAM.
\par
He wishes to thank many useful discussions with his colleagues and friends S.~Logar and D. Portelli. Also, many thanks are due to an anonymous referee for the careful reading of the manuscript and for valuable comments and stimulating questions.

\end{document}